\documentclass[submission,copyright,creativecommons]{eptcs}

\usepackage{iftex}

\ifpdf
	\usepackage{underscore}         
	\usepackage[T1]{fontenc}        
\else
	\usepackage{breakurl}           
\fi

\usepackage{times}
\usepackage{soul}
\usepackage{url}
\usepackage[utf8]{inputenc}
\usepackage[small]{caption}
\usepackage{graphicx}
\usepackage{algorithm}
\usepackage{algorithmic}
\urlstyle{same}
\usepackage{latexsym}
\usepackage{amssymb}
\usepackage{amsmath}
\usepackage{amsthm}
\usepackage{esvect}
\usepackage{xcolor}
\usepackage{xspace}
\usepackage{booktabs}
\usepackage{mathtools}
\usepackage{stmaryrd}
\usepackage{multirow}
\usepackage{tikz}
\usetikzlibrary{shapes,arrows,positioning,patterns,fit,decorations.markings}
\usepackage{listings}
\usepackage{adjustbox}
\usepackage{enumitem}
\usepackage{xparse}
\usepackage{bm}
\usepackage{pgffor}

\foreach \x in {A,...,Z}{%
		\expandafter\xdef\csname cal\x\endcsname{\noexpand\ensuremath{\noexpand\mathcal{\x}}}
	}

\newcommand\T{\rule{0pt}{2.6ex}}       
\newcommand\B{\rule[-1.2ex]{0pt}{0pt}} 

\newcommand{\ol}[1]{\ensuremath{\overline{#1}}}
\renewcommand{\neg}[1]{\ensuremath{\ol{#1}}\xspace}
\newcommand{\lambdasol}{\ensuremath{\lambda}\xspace}
\newcommand{\rowprefix}[1]{\ensuremath{{#1}{:}\ } }
\newcommand{\lit}{\ensuremath{L}\xspace}
\newcommand{\litid}[1]{\ensuremath{\lit_{#1}}\xspace}
\newcommand{\at}{\ensuremath{A}\xspace}
\newcommand{\litset}{\ensuremath{S}\xspace}
\newcommand{\litsetx}{\ensuremath{X}\xspace}
\newcommand{\nottt}{{\thicksim}}
\DeclareMathOperator{\when}{\leftarrow}
\newcommand{\head}[1][r]{\ensuremath{H(#1)}\xspace}
\newcommand{\body}[1][r]{\ensuremath{B(#1)}\xspace}
\newcommand{\pAtoms}{\ensuremath{\calA}\xspace}
\newcommand{\pp}{\ensuremath{\calP}\xspace}
\newcommand{\ppp}{\ensuremath{\pp^\star}\xspace}
\newcommand{\bodyp}[2][r]{\ensuremath{B^{#2}(#1)}\xspace}
\newcommand{\bodypos}[1][r]{\bodyp[#1]{+}}
\newcommand{\bodyneg}[1][r]{\bodyp[#1]{-}}
\newcommand{\LitA}{\ensuremath{Lit_\pAtoms}\xspace}
\newcommand{\ppEx}{\ensuremath{\pp_{ex}}\xspace}
\newcommand{\conf}{\ensuremath{\gamma}\xspace}
\newcommand{\rul}{\ensuremath{r}\xspace}
\newcommand{\rulid}[1]{\ensuremath{\rul_{#1}}\xspace}
\newcommand{\rulrow}[1]{\ensuremath{{\rulid{#1}}{:}\ } }
\newcommand{\inst}{\ensuremath{\calF}\xspace}
\newcommand{\instex}{\ensuremath{\inst_{ex}}\xspace}
\newcommand{\gop}{\ensuremath{G}\xspace}
\newcommand{\cgb}{\ensuremath{\bm\Gamma}\xspace}
\newcommand{\cg}{\ensuremath{\Gamma}\xspace}
\newcommand{\cgr}[1]{\ensuremath{\cg(#1)}\xspace}
\newcommand{\cgid}[1]{\ensuremath{\cg_{#1}\xspace}}
\newcommand{\cgsize}[1]{\ensuremath{w_{#1}\xspace}}
\newcommand{\cgordop}[1]{\ensuremath{>_{#1}}\xspace}
\newcommand{\cgord}{\ensuremath{\cgordop{\cgcov}}\xspace}
\newcommand{\cgordprime}{\ensuremath{\cgordop{\cgcovprime}}\xspace}
\newcommand{\cgcov}{\ensuremath{\cgb}\xspace}
\newcommand{\cq}{\ensuremath{Q}\xspace}
\newcommand{\cqb}{\ensuremath{\mathbf{\cq}}\xspace}
\newcommand{\cqid}[1]{\ensuremath{\cq_{#1}}\xspace}
\newcommand{\confcol}[1]{\ensuremath{{#1}_{\mathit{cf}}}\xspace}
\newcommand{\ppconfcol}{\confcol{\pp}}
\newcommand{\ccg}[1]{\ensuremath{\mathit{CCG}(#1)}\xspace}
\newcommand{\ccgpp}{\ccg{\pp}\xspace}
\newcommand{\ccgppp}{\ccg{\ppp}\xspace}
\newcommand{\ccgppex}{\ccg{\ppEx}\xspace}
\newcommand{\cgcovex}{\ensuremath{\cgb_{ex}}\xspace}
\newcommand{\gcgcov}{\ensuremath{\gop(\cgcov)}\xspace}
\newcommand{\gcgcovex}{\ensuremath{\gop({\cgcovex})}\xspace}
\newcommand{\gcgcovprime}{\ensuremath{\gop({\cgcovprime})}\xspace}
\newcommand{\gcgcovprimeprime}{\ensuremath{\gop({\cgcovprimeprime})}\xspace}
\newcommand{\lblop}{\ensuremath{\mathit{lb}}\xspace}
\newcommand{\lambdaord}[1]{\ensuremath{\succ_{#1}}\xspace}
\newcommand{\cgcovprime}{\ensuremath{\cgb^\prime}\xspace}
\newcommand{\cgcovprimeprime}{\ensuremath{\cgb^ {\prime\prime}}\xspace}
\newcommand{\clq}[1]{\ensuremath{\mathit{CLQ}(#1)}\xspace}
\newcommand{\clqcg}{\ensuremath{\clq{\cgcov}}\xspace}
\newcommand{\cqweight}{\ensuremath{\omega\xspace}}
\newcommand{\vrts}{\ensuremath{V}\xspace}
\newcommand{\edges}{\ensuremath{E}\xspace}
\newcommand{\edgeslbl}[1]{\ensuremath{\edges^{#1}}\xspace}
\newcommand{\gcgclq}{\ensuremath{\gop(\cgcov,\lblop)}\xspace}

\newcommand{\exA}{\ensuremath{a}\xspace}
\newcommand{\exB}{\ensuremath{b}\xspace}
\newcommand{\exC}{\ensuremath{c}\xspace}
\newcommand{\exD}{\ensuremath{d}\xspace}
\newcommand{\exE}{\ensuremath{e}\xspace}
\newcommand{\exF}{\ensuremath{f}\xspace}
\newcommand{\exG}{\ensuremath{g}\xspace}
\newcommand{\exH}{\ensuremath{h}\xspace}

\newcommand{\exJ}{\ensuremath{j}\xspace}
\newcommand{\exK}{\ensuremath{k}\xspace}
\newcommand{\exL}{\ensuremath{l}\xspace}
\newcommand{\exM}{\ensuremath{m}\xspace}
\newcommand{\exN}{\ensuremath{n}\xspace}
\newcommand{\exO}{\ensuremath{o}\xspace}
\newcommand{\exP}{\ensuremath{p}\xspace}
\newcommand{\exQ}{\ensuremath{q}\xspace}

\newcommand{\exS}{\ensuremath{s}\xspace}
\newcommand{\exT}{\ensuremath{t}\xspace}
\newcommand{\exU}{\ensuremath{u}\xspace}

\newcommand{\exW}{\ensuremath{w}\xspace}
\newcommand{\exX}{\ensuremath{x}\xspace}
\newcommand{\exY}{\ensuremath{y}\xspace}
\newcommand{\exZ}{\ensuremath{z}\xspace}

\newcommand{\exSympM}{\ensuremath{\textit{sympM}}\xspace}
\newcommand{\exSympN}{\ensuremath{\textit{sympN}}\xspace}
\newcommand{\exSympO}{\ensuremath{\textit{sympO}}\xspace}
\newcommand{\exCondA}{\ensuremath{\textit{disA}}\xspace}
\newcommand{\exCondB}{\ensuremath{\textit{disB}}\xspace}
\newcommand{\exTreatX}{\ensuremath{\textit{treatX}}\xspace}
\newcommand{\exTreatY}{\ensuremath{\textit{treatY}}\xspace}

\newcommand{\ie}{i.\,e.\xspace }
\newcommand{\cf}{cf.\xspace }

\newcommand{\sT}{s.\,t.\xspace }
\newcommand{\viz}{viz.\xspace }
\newcommand{\wrt}{w.r.t.\xspace }
\newcommand{\contd}{contd.\xspace }

\newtheorem{example}{Example}

\newtheorem{definition}{Definition}
\newtheorem{proposition}{Proposition}

\title{Sorting Strategies for Interactive Conflict Resolution in ASP}
\author{Andre Thevapalan
	\institute{Technische Universität Dortmund, Dortmund, Germany}
	\email{andre.thevapalan@tu-dortmund.de}
	\and
	Gabriele Kern-Isberner
	\institute{Technische Universität Dortmund, Dortmund, Germany}
	\email{gabriele.kern-isberner@cs.tu-dortmund.de}
}

\newcommand{\titlerunning}{Sorting Strategies for Interactive Conflict Resolution in ASP}
\newcommand{\authorrunning}{Andre Thevapalan \& Gabriele Kern-Isberner}

\hypersetup{
	bookmarksnumbered,
	pdftitle    = {\titlerunning},
	pdfauthor   = {\authorrunning},
	pdfsubject  = {EPTCS},               
}
\begin{document}
\maketitle

\begin{abstract}
	Answer set programs in practice are often subject to change.
	This can lead to inconsistencies in the modified program due to conflicts between rules which are the results of the derivation of strongly complementary literals.
	To facilitate the maintenance of consistency in answer set programs, in this paper
	we continue work on a recently presented framework that implements interactive conflict resolution by extending the bodies of conflicting rules by suitable literals, so-called \emph{\lambdasol-extensions}.
	More precisely, we present strategies to choose \lambdasol-extensions that allow for resolving several conflicts at a time in an order that aims at minimizing (cognitive) efforts.
	In particular, we present a graphical representation of connections between conflicts and their possible solutions.
	Such a representation can be utilized to efficiently guide the user through the conflict resolution process by displaying conflicts and suggesting solutions in a suitable order.
\end{abstract}

\section{Introduction}
Answer set programming provides valuable features for usage in real-world applications where complex decisions have to be made, thanks to its declarative nature and the availability of both strong negation and default negation.
Such programs, though, are often subject to change.
Adding rules to an answer set program can however potentially lead to inconsistency due the activation of conflicting rules, \ie, rules with complementary literals in their heads.
The approach in~\cite{ThevapalanKernIsberner2022} deals with inconsistency caused by the derivation of strongly complementary literals.
Herefore, the notion of \lambdasol-extensions for conflicting rules has been introduced that enables an interactive conflict resolution process in which a knowledge expert can help restore the consistency of an updated program in a professionally adequate way.
A \lambdasol-extension is a set of (default) literals with which the body of a conflicting rule can be extended to constrain its applicability and thus resolves the conflict.
However, the paper \cite{ThevapalanKernIsberner2022} focuses on conflicts between two rules only. But conflicts can involve more than two rules, and solutions to one conflict can affect solutions to other conflicts. It is clear  that no new conflicts may arise by extending rule bodies.  So we might expect even synergetic positive effects when considering all interactions between conflicts in a program at the same time and finding a clever order to solve conflicts. This is exactly the topic here.

In this work, we extend the approach of \cite{ThevapalanKernIsberner2022}  by first defining a graph that shows the connections between the conflicts and their solutions, thereby embedding the conflicts of a program and their possible solutions within the overall context of the program's conflict resolution.
This graphical representation of possible solutions is then utilized to define suitable orders over conflicts and the respective \lambdasol-extensions so that one solution can help solving subsequent other conflicts. In particular, the strategies presented in this paper can also help shrink the (sometimes large) sets of possible  \lambdasol-extensions by choosing such extensions that are involved in more than one conflict. In this way, also the cognitive burden for knowledge experts during the conflict resolution process can be reduced.

The main contributions of this paper can be summarized as follows:
\begin{itemize}
	\item We introduce a suitable graph structure named \lambdasol-graphs for \lambdasol-extensions to display the relationships between conflicts \wrt their possible solutions.
	\item We show how \lambdasol-graphs provide the necessary, syntax-based information to find suitable orders over conflicts.
	\item Utilizing \lambdasol-graphs, we furthermore explain how one can obtain an order over the possible solutions of a conflict.
	\item Based on these results, we present an explicit sorting strategy that defines an order over conflicts and their solutions for the application in a conflict resolution framework as proposed in~\cite{ThevapalanKernIsberner2020,ThevapalanKernIsberner2022}.
\end{itemize}

This paper begins by laying out the necessary preliminaries in Section~\ref{sec:preliminaries}.
Section~\ref{sec:conflicts-and-lambdasol-extensions} provides further terminology regarding conflict resolution which is then used in Section~\ref{sec:relationship-between-lambdasol-extensions} to construct \lambdasol-graphs that provide crucial information regarding the connections between conflicts and \lambdasol-extensions.
Based on these results in Section~\ref{subsec:order-strategy}, a sorting strategy is proposed that defines an order over conflicts and their possible solutions by taking their relationships among each other into account.
Section~\ref{sec:related-work} gives a brief overview of related work.
We conclude this paper by summarizing our findings in Section~\ref{sec:conclusion-and-future-work} and briefly outlining possible future work.

\section{Preliminaries}%
\label{sec:preliminaries}
In this paper, we look at non-disjunctive \emph{extended logic programs} (ELPs)~\cite{GelfondLifschitz1991}.
An ELP is a finite set of rules over a set $\calA$ of propositional atoms.
A literal \lit is either an atom \at (\emph{positive literal}) or a negated atom $\neg\at$ (\emph{negative literal}).
For a literal \lit, the \emph{strongly complementary} literal \neg\lit is $\neg\at$ if ${\lit = \at}$ and \at otherwise.
A \emph{default-negated literal} \lit, called \emph{default literal}, is written as $\nottt\lit$.
Given a set \litset of literals, we say a literal \emph{\lit is true in \litset} (symbolically ${\litset \vDash \lit}$) iff ${\lit \in \litset}$ and \emph{$\nottt \lit$} is true in \litset (symbolically ${\litset \vDash \nottt \lit}$) iff ${\lit \notin \litset}$.
A set of literals is \emph{inconsistent} if it contains strongly complementary literals.

We are now ready to specify the form of ELPs.

A \emph{rule} \rul is of the form
\begin{align}\label{eqn:rule}
	\litid{0} \when \litid{1}, \dots, \litid{m}, \nottt \litid{m+1}, \dots, \nottt \litid{n}.,
\end{align}
with literals $\litid{0}, \dots, \litid{n}$ and \mbox{$0 \leq m \leq n$}.
The literal $\litid{0}$ is the \emph{head} of \rul, denoted by $\head$, and $\{\litid{1}, \dots \litid{m},$ $\nottt \litid{m+1}, \dots \nottt \litid{n}\}$ is the \emph{body} of \rul, denoted by \body.
Furthermore, $\{\litid{1}, \dots, \litid{m}\}$ is denoted by $\bodypos$ and $\{\litid{m+1}, \dots, \litid{n}\}$ by $\bodyneg$.
An \emph{extended logic program (ELP)} is a set of rules of the form (\ref{eqn:rule}).
Given a set \litset of literals, a rule $r$ \emph{is true in \litset} (symbolically ${\litset \vDash \rul}$) iff \head is true in \litset whenever \body is true in \litset.
In this case we also say that \litset \emph{satisfies} \rul.
Given an ELP \pp over \pAtoms without default negation and $\LitA=\{\pAtoms \cup \{\neg \at \mid \at \in \pAtoms\}\}$, the \emph{answer set} of \pp is either (a) the smallest set $\litset \subseteq \LitA$ such that \litset is consistent and $\litset \vDash \rul$ for every rule $\rul \in \pp$, or (b) the set \LitA of literals in case all such sets are inconsistent.
Note that, similar to Horn logic programs, each such ELP without default negation has exactly one minimal model which might, however, be inconsistent.

In general, an \emph{answer set of an ELP \pp} is determined by its reduct.
The \emph{reduct} $\pp^\litset$ of a program \pp relative to a set $\litset$ of literals is defined by
\begin{align*}
	\pp^\litset = \{\head \when \bodypos. \mid \rul \in \pp, \bodyneg \cap \litset = \emptyset \}.
\end{align*}
A set \litset of literals is an \emph{answer set} of \pp if it is the answer set of $\pp^\litset$~\cite{GelfondLifschitz1991}.

\section{Conflicts and \lambdasol-extensions}%
\label{sec:conflicts-and-lambdasol-extensions}

In~\cite{ThevapalanKernIsberner2020,ThevapalanKernIsberner2022}, a framework is outlined that supports knowledge experts in restoring consistency interactively in answer set programs.
To this aim, the bodies of rules involved in a conflict are extended suitably by known literals, so-called \lambdasol-extensions. In this way, conflicts that cause inconsistency can be resolved.
In particular, a method is provided that for every conflict generates all possible \lambdasol-extensions, from which the expert can then choose the most adequate ones.

We briefly recall the basic techniques from \cite{ThevapalanKernIsberner2020,ThevapalanKernIsberner2022}, in particular the terms \emph{conflicts} and \emph{\lambdasol-extensions}, and illustrate them with the following (running) example.
\begin{example}\label{ex:run}
	Let \ppEx be specified by the following rules:
	\begin{align*}
		 & \rulrow{1} \exA \when \exB,\nottt \exC.            &
		 & \rulrow{2} \neg \exA \when \exB.                   &
		 & \rulrow{3} \exX \when \exD,\exE,\exF,\nottt \exC.  &
		 & \rulrow{4} \neg \exX \when \exD,\exE.              & \\
		 & \rulrow{5} \exY \when \exG,\exH,\exF.              &
		 & \rulrow{6} \neg \exY \when \exG.                   &
		 & \rulrow{7} \exZ \when \exJ,\exK,\nottt \exL.       &
		 & \rulrow{8} \neg \exZ \when \exJ,\nottt \exL.       & \\
		 & \rulrow{9} \exW \when \exF,\exM,\exN.              &
		 & \rulrow{10} \neg \exW \when \exM.                  &
		 & \rulrow{11} \neg \exW \when \exN.                  &
		 & \rulrow{12} \exP \when \exO,\exH,\exF,\nottt \exQ. & \\
		 & \rulrow{13} \neg \exP \when \exO,\nottt \exQ.      &
		 & \rulrow{14} \exU \when \exS.                       &
		 & \rulrow{15} \neg \exU \when \exS,\neg \exT,\exH.   &
		 & \rulrow{16} \neg \exU \when \exS,\exT,\exH.
	\end{align*}
\end{example}

Note that \ppEx in Example~\ref{ex:run} is trivially consistent because it is a so-called \emph{program core}~\cite{ThevapalanKernIsberner2022}, i.e., it has no facts.
Program cores are usable with different instances by expanding the program by a corresponding set \inst of facts (instance data).
A typical example would be a medical expert system where the instance data are provided by the patients and so cannot be part of the generic program.
\begin{example}\label{ex:simple}
	Suppose the following program \pp describing the symptoms of and treatments for two diseases \exCondA and \exCondB:
	\begin{align*}
		 & \exCondA \when \exSympM, \exSympN. &
		 & \exCondB \when \exSympM, \exSympO. &
		 & \exTreatX \when \exCondA.          &
		 & \exTreatY \when \exCondB.
	\end{align*}
	Program \pp is a program core.
	Adding the instance data $\inst = \{\exSympM, \exSympO\}$ as patient data to \pp yields the unique answer set $\inst\cup\{\exCondB,\exTreatY\}$ describing that the corresponding patient has condition \exCondB and should be treated with treatment \exTreatY.
\end{example}

However, it is easy to see that there exist multiple instance data \inst for \ppEx that would yield an inconsistent program $\ppEx\cup\inst$ because of rules in \ppEx that, if their bodies are satisfied simultaneously, derive strongly complementary literals.

\begin{example}[Example~\ref{ex:run} \contd]\label{ex:run:conflict}
	Let \instex be a set of instance data for \ppEx such that $\exB \in \instex$ and $\exC \not\in \instex$.
	Then in $\ppEx\cup\instex$, both $\rulid{1}$ and $\rulid{2}$ are simultaneously satisfied.
	Hence, $\ppEx\cup\instex$ is inconsistent as both \exA and $\neg\exA$ are derivable.
\end{example}

Rule sets like $\{r_1,r_2\}$ of the previous example are called \emph{conflicting rules} or \emph{conflicts}.

\begin{definition}[Conflict and conflict group~(\cf~\cite{ThevapalanKernIsberner2020})]
	In a logic program \pp, two rules $\rul,\rul^\prime$ are \emph{conflicting} iff there exists a set $\litset$ of literals such that \litset satisfies $\body$ and $\body[\rul^\prime]$ simultaneously and the head literals $\head$ and $\head[\rul^\prime]$ are strongly complementary.
	A \emph{conflict} is a set $\conf = \{\rul,\rul^\prime\}$ of two conflicting rules $\rul,\rul^\prime$.
	For a rule \rul, the corresponding \emph{conflict group} $\cgr{\rul}$ is the set of all conflicts \conf in \pp with $\rul \in \conf$.
	The \emph{size of a conflict group} is the number of different conflicts in a conflict group.
\end{definition}

Intuitively, two rules are conflicting if their bodies can be true simultaneously and their head literals are contradictory.
The following example illustrates conflict groups and their sizes in \ppEx.

\begin{example}[Example~\ref{ex:run} \contd]\label{ex:run:groups}
	The conflict groups for \rulid{1} and \rulid{4} are the sets $\cgr{\rulid{1}} = \{\{\rulid{1},\rulid{2}\}\}$ and $\cgr{\rulid{4}} = \{\{\rulid{3},\rulid{4}\}\}$, respectively.
	Both groups have each size 1 while conflict group $\cgr{\rulid{14}} = \{\{\rulid{14},\rulid{15}\},$ $\{\rulid{14},\rulid{16}\}\}$ has size 2.
\end{example}

If both rules of a conflict in a program core are simultaneously satisfiable, a program is potentially inconsistent~\cite{ThevapalanKernIsberner2022}.
In the rest of this paper, we deal with programs that are inconsistent due to \emph{conflicts} and we require a program to be coherent, that is, each given program has at least one answer set (\cf~\cite{CostantiniIntrigilaProvetti2003}).

In order to guarantee that a program has an answer set whenever it is extended by consistent instance data, all conflicts have to be resolved.
For that, we use an approach called \emph{\lambdasol-extensions}.
In the following, we will present the main aspects of \lambdasol-extensions and properties that are relevant to this work.
For more details, we refer the reader to~\cite{ThevapalanKernIsberner2022}.

\begin{definition}[\lambdasol-extensions, \cf\cite{ThevapalanKernIsberner2022}]
	Suppose a program \pp and a rule $\rul\in\pp$ that is in conflict with a non-empty set of rules $R \subset \pp$.
	Let \litsetx be a set of (default) literals built from atoms occurring in $R$, and let $\rul^{\prime\prime}$ be a rule obtained from \rul where $\body$ is extended by \litsetx, \viz, $\rowprefix{\rul^{\prime\prime}}\head \when \body \cup \litsetx.$.
	This set \litsetx is a \emph{(conflict-resolving) \lambdasol-extension for rule \rul} if for every rule $\rul^\prime\in R$ it holds that $\rul^\prime$ and $\rul^{\prime\prime}$ are no longer conflicting.
	Such a rule $\rul^{{\prime\prime}}$ is called a \emph{\lambdasol-extended rule \wrt \litsetx}.
	A \lambdasol-extension \litsetx for $\rul$ is \emph{minimal} iff there exists no set $\litsetx^\prime\subset\litsetx$ such that $\litsetx^\prime$ is also a \lambdasol-extension for \rul.
	A conflict group \cgr{\rul} is called \emph{resolvable} if there exists a rule $\rul^\prime$ and a \lambdasol-extension \litsetx for $\rul^\prime$ such that replacing $\rul^\prime$ in every conflict of \cgr{\rul} by the corresponding \lambdasol-extended rule \wrt \litsetx leads to all conflicts in \cgr{\rul} being resolved.
	Rule $\rul^\prime$ is then called the \emph{representative of~\cgr{\rul}}.
\end{definition}

Intuitively, a \lambdasol-extension \litsetx for a representative $\rul$ of a resolvable conflict group \cgr{\rul} is a set of (default) literals such that if the body of \rul is expanded by \litsetx, any previous conflicts of \rul are resolved.
In~\cite{ThevapalanKernIsberner2022}, the authors show that a conflict $\{\rul,\rul^\prime\}$ can only be resolved iff there exists at least one atom in $\body[\rul^\prime]$ that is not in \body or vice versa.
They also demonstrate that resolving each conflict in a program core \pp using \lambdasol-extensions yields a \emph{uniformly non-contradictory} program core \ppp, meaning, the program core can be used with any set of consistent instance data that consists of literals that only appear in rule bodies of \pp.

We will illustrate the workings of \lambdasol-extensions in the following example.
\begin{example}[Example~\ref{ex:run:groups} \contd]\label{ex:run:multiple}
	Consider conflict groups \cgr{\rulid{14}} in Example~\ref{ex:run} which consists of the conflicts $\{\rulid{14},\rulid{15}\}$ and $\{\rulid{14},\rulid{16}\}$.
	For these conflicts, we get the \lambdasol-extensions  $\{\nottt\exH\}$, $\{\neg\exH\}$ and  $\{\nottt\exT,\nottt \neg\exT\}$ as possible extensions for the rule body of \rulid{14}.
	This means both conflicts of \rulid{14} can be solved by replacing \rulid{14} in \ppEx by one of the following rules:
	\begin{align*}
		 & {\rulid{14}'}{:\ } \exU \when \exS,\nottt\exH.                &
		 & \rulid{14}''{:\ } \exU \when \exS,\neg\exH.                   &
		 & \rulid{14}'''{:\ } \exU \when \exS,\nottt\neg\exT,\nottt\exT.
	\end{align*}
\end{example}

\begin{table}[tb]
	\centering
	\caption{Components of \lambdasol-graph}\label{tab:tab-p}
	{\begin{tabular}{@{\extracolsep{\fill}}llllll}
			\hline
			\textit{Group}              & \textit{Representative} & \textit{Conflicts}                                    & \textit{\lambdasol-Ext.}                           & \textit{Node}         & \textit{Cliques} \T\B        \\ \hline
			\textbf{$\cgr{\rulid{2}}$}  & \rulid{2}               & $\{\rulid{1},\rulid{2}\}$                             & $\{\exC\}$                                         & ($\cgr{\rulid{2}}$,1) & \cqid{3}                  \T \\
			\textbf{$\cgr{\rulid{4}}$}  & \rulid{4}               & $\{\rulid{3},\rulid{4}\}$                             & $\{\nottt \exF\},\{\exC\}$                         & (\cgr{\rulid{4}},1)   & \cqid{2},\cqid{3}            \\
			\textbf{$\cgr{\rulid{6}}$}  & \rulid{6}               & $\{\rulid{5},\rulid{6}\}$                             & $\{\nottt \exH\},\{\nottt \exF\}$                  & (\cgr{\rulid{6}},1)   & \cqid{1},\cqid{2}            \\
			\textbf{$\cgr{\rulid{8}}$}  & \rulid{8}               & $\{\rulid{7},\rulid{8}\}$                             & $\{\nottt  \exK\}$                                 & (\cgr{\rulid{8}},1)   & \cqid{4}                     \\
			\textbf{$\cgr{\rulid{10}}$} & \rulid{10}              & $\{\rulid{9},\rulid{10}\}$                            & $\{\nottt  \exF\}$                                 & (\cgr{\rulid{10}},1)  & \cqid{1}                     \\
			\textbf{$\cgr{\rulid{11}}$} & \rulid{11}              & $\{\rulid{9},\rulid{11}\}$                            & $\{\nottt  \exF\}$                                 & (\cgr{\rulid{11}},1)  & \cqid{1}                     \\
			\textbf{$\cgr{\rulid{13}}$} & \rulid{13}              & $\{\rulid{12},\rulid{13}\}$                           & $\{\nottt \exH\},\{\nottt \exF\}$                  & (\cgr{\rulid{13}},1)  & \cqid{1},\cqid{2}            \\
			\textbf{$\cgr{\rulid{14}}$} & \rulid{14}              & $\{\rulid{14},\rulid{15}\},\{\rulid{14},\rulid{16}\}$ & $\{\nottt \exH\},\{\nottt \neg \exT,\nottt \exT\}$ & (\cgr{\rulid{14}},2)  & \cqid{2},\cqid{5} \B         \\
			\hline
		\end{tabular}}
\end{table}

Note that no additional conflicts are introduced as the premise for \rul is solely extended yielding a more specific condition.
However, not every conflict group of a program is resolvable.
The following example shows that in order to utilize conflict groups for the resolution of conflicts, their proper selection is a crucial step.

\begin{example}[Example~\ref{ex:run:groups} \contd]\label{ex:run:extensions}
	For rules \rulid{1} and \rulid{2} it holds that $\cgr{\rulid{1}} = \cgr{\rulid{2}} = \{\{\rulid{1},\rulid{2}\}\}$.
	As $\body[\rulid{2}] \subseteq \body[\rulid{1}]$ holds, \rulid{1} can not be picked as a representative.
	However, $\body[\rulid{1}]\backslash\body[\rulid{2}] = \{\nottt \exC\}$ holds and thus the \lambdasol-extension $\{\exC\}$ resolves the conflict of both conflict groups.
	Similarly, regarding \cgr{\rulid{4}}, we get the \lambdasol-extensions $\{\neg\exF\}$, $\{\nottt\exF\}$, and $\{\exC\}$ for \rulid{4}.
	Note that conflict group $\cgr{\rulid{9}}$ has no possible representative as there does not exist any \lambdasol-extension for \rulid{9} that solves all conflicts in \cgr{\rulid{9}}.
	It is therefore necessary to pick the conflict groups \cgr{\rulid{10}} and \cgr{\rulid{11}} which are both resolvable via the representative \rulid{10} and \rulid{11} respectively.
	Table~\ref{tab:tab-p} shows the \lambdasol-extensions of the different resolvable conflict groups of \ppEx.
\end{example}

In the rest of this paper for any atom $a$, we omit the \lambdasol-extension that contains $\neg a$ whenever $\nottt a$ and $\neg a$ can both be used because $\nottt a$ is more cautious.
In Example~\ref{ex:run:extensions}, we henceforth solely state $\{c\}$ and $\{\nottt f\}$ as the \lambdasol-extensions for \rulid{4}.

Furthermore, note that conflicts like $\{\exA\when\exB., \neg\exA\when\exB.\}$ require an expansion of both rules bodies by complementary literals in order to be resolved.
We argue that suitable expansions for such conflicts can only be determined by the knowledge expert.
Thus, conflicts of this type will not be considered in this paper.

\section{Relationship between \lambdasol-extensions}%
\label{sec:relationship-between-lambdasol-extensions}
Naturally, a practical implementation of the interactive framework for conflict resolution, as presented in \cite{ThevapalanKernIsberner2020,ThevapalanKernIsberner2022}, has to provide a proper workflow to resolve each conflict and suggest solutions in a suitable order.
We propose that a syntax-based approach considers the different connections between possible solutions in order to condense the resolution of multiple conflicts.
For this reason, we introduce \emph{\lambdasol-graphs} and corresponding \emph{clique covers} that can be used to point out such connections.
Based on these results in Section~\ref{sec:sort-conflict-using-lambda-graphs}, we demonstrate how \lambdasol-graphs and clique covers can be used to define explicit strategies that specify in which order conflicts and solutions should be presented to the knowledge expert.

For resolving conflicts in a program thoroughly, we have to make sure that each rule that is involved in a conflict must be taken into regard. This idea is formalized by conflict group covers.

\begin{definition}[Conflict group cover]\label{def:conflictgroupcover}
	Let $\confcol{\pp}$ be the set of all rules in \pp that are part of a conflict, and let \cgcov be a set of resolvable conflict groups.
	Then, \cgcov is a \emph{conflict group cover of \pp} if each rule in \confcol{\pp} appears in at least one conflict group of~\cgcov.
	The set of all inclusion-minimal conflict group covers of a logic program \pp is denoted by \ccgpp.
\end{definition}

A conflict group cover \cgcov of \pp therefore implies via the representative of each conflict group in \cgcov which rules in \pp shall be modified.

Notice that a conflict group cover \cgcov involves a sufficient set of rules that have to be modified to resolve every conflict.

\begin{proposition}\label{prop:groupcover}
	Given a conflict group cover \cgcov, expanding the body of each rule $\rul$ in $\{\rul\mid \text{ there is } \cgr{\rul'}\in\cgcov \text{ such that }\rul \text{ is a representative of }\cgr{\rul'}\}$ by one of its respective \lambdasol-extensions yields a consistent program.
\end{proposition}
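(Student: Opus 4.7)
My plan is to decompose the claim into two independent properties of the modified program: (i) no new conflicts are introduced by the body‑extensions, and (ii) every preexisting conflict of \pp is eliminated. Together with the coherence assumption, these imply that the resulting program core is uniformly non-contradictory, i.e.\ consistent when combined with any admissible instance data.

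For (i), I would simply invoke the observation made immediately after Example~\ref{ex:run:multiple}: extending the body of a rule \rul by additional (default) literals only strengthens its premise, so no literal set satisfying the body of the \lambdasol-extended rule $\rul^{\prime\prime}$ fails to satisfy the original body of \rul. Consequently the set of simultaneously satisfiable rule pairs can only shrink, and no pair that was not already conflicting in \pp becomes conflicting after the modifications.

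For (ii), I would take an arbitrary conflict $\{\rul,\rul^\prime\}$ of \pp and show that it is resolved. Both $\rul$ and $\rul^\prime$ lie in $\confcol{\pp}$, so the cover property forces each of them to appear in some chosen $\cgr{\rul^{\prime\prime}}\in\cgcov$. The key observation is that the only conflict groups of \pp which actually contain the conflict $\{\rul,\rul^\prime\}$ (as opposed to merely containing one of its rules in some other conflict) are $\cgr{\rul}$ and $\cgr{\rul^\prime}$. One of these two must be present in $\cgcov$: otherwise, the two rules could appear only via other conflicts, and at least one of them (the rule whose only conflict is the one with the other) would fail to be covered. Once $\cgr{\rul}$ (say) is in $\cgcov$, resolvability yields a representative whose \lambdasol-extension, by definition, makes the rule no longer conflict with any rule appearing together with it in the group—in particular with $\rul^\prime$.

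The main obstacle is the coverage argument inside (ii): one has to translate the rule-level covering condition of Definition~\ref{def:conflictgroupcover} into the conflict-level statement that each individual conflict lies in at least one chosen group. The cleanest way I see is a case analysis on how $\rul$ and $\rul^\prime$ can be covered: if the covering group for $\rul$ has the form $\cgr{\rul^{\prime\prime}}$ with $\rul^{\prime\prime}\notin\{\rul,\rul^\prime\}$, then it only resolves conflicts of $\rul^{\prime\prime}$, and the same reasoning applied to $\rul^\prime$ eventually forces $\cgr{\rul}$ or $\cgr{\rul^\prime}$ into $\cgcov$. Combining this with (i), the modified program has strictly fewer conflicts than \pp and none of the original ones survive, giving the proposition.
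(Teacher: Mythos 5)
Your decomposition --- (i) body extensions introduce no new conflicts, (ii) every existing conflict gets resolved --- is sound and matches what the paper relies on (the paper handles (i) only informally after Example~\ref{ex:run:multiple}, and its proof simply asserts (ii)); your observation that the conflict $\{r,r'\}$ is contained in exactly the two groups $\cgr{r}$ and $\cgr{r'}$ is also correct. The gap is the claim that the rule-level covering condition of Definition~\ref{def:conflictgroupcover} forces one of $\cgr{r},\cgr{r'}$ into \cgcov. It does not: a rule can be covered through a \emph{different} conflict inside another rule's group. Take $r\colon x \when a,b.$, \ $r'\colon \neg x \when a,c.$, \ $s\colon \neg x \when b,\neg c.$, \ $t\colon x \when c,\neg b.$; the conflicts are exactly $\{r,r'\}$, $\{r,s\}$ and $\{r',t\}$ (the bodies of $s$ and $t$ are jointly unsatisfiable). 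Then $\cgcov=\{\cgr{s},\cgr{t}\}$ consists of resolvable groups (representatives $s$ and $t$, each with \lambdasol-extension $\{\nottt a\}$), covers all four conflicting rules ($r$ via $\{r,s\}\in\cgr{s}$, $r'$ via $\{r',t\}\in\cgr{t}$), and is even inclusion-minimal, yet no chosen group contains $\{r,r'\}$, so after extending $s$ and $t$ the program still derives $x$ and $\neg x$ from the facts $a,b,c$. Your parenthetical ``the rule whose only conflict is the one with the other'' quietly assumes that one of $r,r'$ has no further conflicts, which is exactly the case the argument may not exclude, and the concluding claim that the case analysis ``eventually forces $\cgr{r}$ or $\cgr{r'}$ into $\cgcov$'' is therefore unsupported.

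The missing ingredient is conflict-level coverage: what the proposition actually needs (and what the paper's one-sentence justification tacitly uses) is that for every conflict $\{r,r'\}$ at least one of $\cgr{r},\cgr{r'}$ belongs to \cgcov, i.e.\ every conflict of \confcol{\pp} lies in some chosen group --- as happens in the running example, where both \cgr{\rulid{10}} and \cgr{\rulid{11}} are selected because \cgr{\rulid{9}} is not resolvable. Under that stronger reading of the cover, your step (ii) goes through exactly as you describe via resolvability of the chosen groups, and your step (i), the monotonicity of body extension, needs no change; without it, no argument along your lines (or the paper's) can close the gap, because the configuration above satisfies every hypothesis you are given.
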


\begin{proof}
	Let \cgcov be a conflict group cover of a program \pp with conflicts and \ppconfcol the set of conflicting rules in \pp.
	By Definition~\ref{def:conflictgroupcover}, a set $\cgcov\in\ccgpp$ is a set of conflict groups such that every rule in \ppconfcol appears in at least one conflict group $\cg\in\cgcov$.
	Since every conflict group in \cgcov is resolvable by Definition~\ref{def:conflictgroupcover}, there exists at least one \lambdasol-extension for the representative of each conflict group in \cgcov.
	This in turn means that applying a respective \lambdasol-extension to the representative of each conflict group in \cgcov resolves every conflict in \pp, thus, yielding a conflict-free program $\pp^\prime$.
\end{proof}

This result implies that regarding conflicting rules, it is mandatory for the knowledge expert to initially decide which rules are allowed to be modified and which rules must stay unaffected.
This allows to determine all appropriate conflict groups and consequently all appropriate conflict groups covers.
Choosing the most suitable cover then ensures that a sufficient and moreover the most suitable set of rules can be modified using \lambdasol-extensions to obtain a uniformly non-contradictory program.

We can now define  \lambdasol-graphs \wrt a conflict group cover \cgcov by making use of weights for the nodes, and labels for the edges to store information which is crucial for the resolution process.

\begin{definition}[\lambdasol-graph]\label{def:graph}
	Given a conflict group cover $\cgcov\in\ccgpp$ of a logic program \pp, the \emph{\cgcov-induced \lambdasol-graph \gcgcov} is a tuple $(\vrts,\edges)$ of weighted nodes \vrts and labeled edges \edges where \vrts contains a weighted node $(\cg,\cgsize{\cg})$ for each conflict group \cg in \cgcov with size $\cgsize{\cg}$, and \edges contains labeled edges $(\cg,\cg^\prime,\lambdasol)$ whenever the representatives of two different conflict groups \cg and $\cg^\prime$ have a common \lambdasol-extension \lambdasol, and for any \lambdasol-extension \lambdasol that is an extension for a representative of only one conflict group \cg, there exists a self-loop $(\cg,\cg,\lambdasol)$.
\end{definition}

Since every conflict group in a program \pp is represented by a weighted node and each node has either an edge to itself or to another node that shares a common \lambdasol-extension, every conflict is considered in a \lambdasol-graph.

We now illustrate \cgcov-induced \lambdasol-graphs using the running example.

\begin{example}[Example~\ref{ex:run} \contd]\label{ex:run:induce}
	Suppose \ppEx $= \confcol{\pp}$ and the conflict group cover $\cgcovex =$ $\{\cgr{\rulid{2}},$ $\cgr{\rulid{4}},$ $\cgr{\rulid{6}},$ $\cgr{\rulid{8}},$ $\cgr{\rulid{10}},$ $\cgr{\rulid{11}}, $ $\cgr{\rulid{13}},$ $\cgr{\rulid{14}}\} \in \ccgppex$ are given.
	Table~\ref{tab:tab-p} shows the conflict groups in \cgcovex and the \lambdasol-extensions of the corresponding representatives as stated in Table~\ref{tab:tab-p}.
	The \cgcovex-induced \lambdasol-graph $\gcgcovex =$ $(\vrts,\edges)$ is obtained in the following way:
	for each conflict group $\cg$ and its size \cgsize{\cg}, we define $(\cg,\cgsize{\cg})$ as the corresponding weighted node.
	Then, \vrts consists of all such weighted nodes, \viz, $\vrts =$ $\{(\cgr{\rulid{2}},1),$ $(\cgr{\rulid{4}},1),$ $(\cgr{\rulid{6}},1),$ $(\cgr{\rulid{8}},1),$ $(\cgr{\rulid{10}},1),$ $(\cgr{\rulid{11}},1),$ $(\cgr{\rulid{13}},1),$ $(\cgr{\rulid{14}},2)\}$.
	The set \edges consists of all labeled edges $(\cgid{},\cgid{}^\prime,\lblop)$ such that \cgid{} and $\cgid{}^\prime$ are pairs of different nodes in \vrts that share the common label \lblop, or pairs of identical nodes \cg if the extension corresponding to \lblop is only a solution for \cg, \viz, $\edges =$ $\{$
	$(\cgr{\rulid{2}},\cgr{\rulid{4}},\{\exC\}),$
	$(\cgr{\rulid{8}},\cgr{\rulid{8}},\{\nottt \exK\})$,
	$(\cgr{\rulid{14}},\cgr{\rulid{14}},\{\nottt\neg\exT,\nottt\exT\})$
	$\} \cup F \cup H$ where
	$F = \{(\cgr{\rul},\cgr{\rul^\prime},\{\nottt \exF\})\mid \rul,\rul^\prime\in\{\rulid{4}, \rulid{6}, \rulid{10}, \rulid{11}, \rulid{13}\}, \rul\neq\rul^\prime\}$ and
	$H = \{(\cgr{\rul},\cgr{\rul^\prime},\{\nottt \exH\})\mid \rul,\rul^\prime\in\{\rulid{6}, \rulid{13}, \rulid{14}\},\rul\neq\rul^\prime\}$.

	The graphical representation of the resulting \lambdasol-graph \gcgcovex is displayed in Figure~\ref{fig:mgraph}.
\end{example}

\makeatletter
\begin{figure}
	\centering
	\begin{tikzpicture}[
			auto=center,
			group/.style={circle,fill=black!7,draw},
			label/.style={fill=black!12},
			weight/.style={above,left,color=red,yshift=4, xshift=0},
			selfloop/.style={
					to path={
							\pgfextra{
								\let\tikztotarget=\tikztostart
							}[looseness=5]\tikz@to@curve@path},font=\sffamily\small
				}]
		\node[group] (c2) at (5,0)  {$\cgr{\rulid{2}}$};
		\node[group] (c4) at (0,0)  {$\cgr{\rulid{4}}$};
		\node[group] (c6) at (6.5,3){$\cgr{\rulid{6}}$};
		\node[group] (c8) at (6.5,0)  {$\cgr{\rulid{8}}$};
		\node[group] (c10) at (2.5,4.5){$\cgr{\rulid{10}}$};
		\node[group] (c11) at (2.5,1.5) {$\cgr{\rulid{11}}$};
		\node[group] (c13) at (0,6) {$\cgr{\rulid{13}}$};
		\node[group] (c14) at (6.5,6) {$\cgr{\rulid{14}}$};
		\path[-,draw,thick]
		(c2) edge[color=purple,bend left=5] node[label] {$\exC$} (c4)
		(c4) edge[color=orange,bend right=15] node[label] {$\nottt \exF$} (c6)
		(c4) edge[color=orange,bend left=15] node[label] {$\nottt \exF$} (c10)
		(c4) edge[color=orange] node[label] {$\nottt \exF$} (c11)
		(c4) edge[color=orange,bend left=5] node[label] {$\nottt \exF$} (c13)
		(c6) edge[color=orange, bend left=7.5] node[label] {$\nottt \exF$} (c10)
		(c6) edge[color=orange, bend right=7.5] node[label] {$\nottt \exF$} (c11)
		(c6) edge[color=teal,bend right=37.5] node[label] {$\nottt \exH$} (c13)
		(c6) edge[color=orange,bend right=15] node[label] {$\nottt \exF$} (c13)
		(c6) edge[color=teal, bend right=5] node[label] {$\nottt \exH$} (c14)
		(c8) edge[selfloop,color=olive] node[label] {$\nottt \exK$} (c8)
		(c10) edge[color=orange,bend left=5] node[label] {$\nottt \exF$} (c11)
		(c10) edge[color=orange] node[label] {$\nottt \exF$} (c13)
		(c11) edge[color=orange,bend left=15] node[label] {$\nottt \exF$} (c13)
		(c13) edge[color=teal,bend left=15] node[label] {$\nottt \exH$} (c14)
		(c14) edge[selfloop,color=cyan] node[label] {$\nottt \neg \exT, \nottt \exT$} (c14);
		\draw
		(c2.west) node[weight] {1}
		(c4.west) node[weight] {1}
		(c6.east) node[xshift=4mm,weight] {1}
		(c8.east) node[xshift=4mm,weight] {1}
		(c10.west) node[weight] {1}
		(c11.west) node[weight] {1}
		(c13.west) node[weight] {1}
		(c14.east) node[xshift=4mm,weight] {2};
	\end{tikzpicture}
	\caption{\lambdasol-graph \gcgcovex}\label{fig:mgraph}
\end{figure}
\makeatother

The graph illustrates several complete subgraphs which are sets of nodes where all nodes are connected to each other by an edge with the same label.
Such subgraphs we call \emph{\lambdasol-cliques}.

\begin{definition}[\lambdasol-clique]\label{def:clique}
	Suppose a logic program \pp and a conflict group cover $\cgcov \in \ccgpp$.
	A \emph{\lambdasol-clique} \wrt a label \lblop in a \lambdasol-graph $\gcgcov = (\vrts,\edges)$ is a maximal subgraph $\gcgclq = (\vrts^\prime,\edges^\prime)$ of \gcgcov where (1) $\edges^\prime \subseteq \edges$ contains all edges in \edges with label \lblop, and (2) $\vrts^\prime \subseteq \vrts$ contains every node that is connected to an edge in $\edges^\prime$.
	We define the \emph{weight of a \lambdasol-clique} as the sum of the weights of all nodes that occur in the \lambdasol-clique.
	The set of all \lambdasol-cliques in a \lambdasol-graph \gcgcov is denoted by \clqcg.
\end{definition}

In the following, given a set of edges \edges, its subset of all edges with label \lblop is denoted by \edgeslbl{\lblop}.

Since a \lambdasol-clique \gcgclq contains all edges of \edgeslbl{\lblop} and all their connected nodes, every \lambdasol-clique is a complete graph.

\begin{example}[Example~\ref{ex:run:induce} \contd]\label{ex:run:cliques}
	$\gcgcovex = (\vrts,\edges)$ contains the following five \lambdasol-cliques:
	\begin{itemize}[leftmargin=4ex]
		\item[\cqid{1}] = $(\{(\cgr{\rulid{6}},1),(\cgr{\rulid{13}},1), (\cgr{\rulid{14}},2)\},\edgeslbl{\nottt\exH},\{\nottt \exH\})$ with weight 4
		\item[\cqid{2}] = $(\{(\cgr{\rulid{4}},1),(\cgr{\rulid{6}},1),(\cgr{\rulid{10}},1),(\cgr{\rulid{11}},1),(\cgr{\rulid{13}},1)\},\edgeslbl{\nottt\exF},\{\nottt \exF\})$ with weight 5
		\item[\cqid{3}] = $(\{(\cgr{\rulid{2}},1), (\cgr{\rulid{4}},1)\},\edgeslbl{\nottt\exC},\{\exC\})$ with weight 2
		\item[\cqid{4}] = $(\{(\cgr{\rulid{8}},1)\},\edgeslbl{\nottt\exK},\{\nottt \exK\})$ with weight 1
		\item[\cqid{5}] = $(\{(\cgr{\rulid{14}},2)\},\edgeslbl{\nottt \exT,\nottt \neg \exT},\{\nottt \exT,\nottt \neg \exT\})$ with weight 2
	\end{itemize}
	Table~\ref{tab:tab-p} shows which conflicts are involved in the different cliques
\end{example}

As all conflicts represented in a \lambdasol-clique $(\vrts^\prime,\edges^\prime)$ share a \lambdasol-extension \lambdasol, its weight $\cqweight$ indicates that $\cqweight$ different conflicts can be solved by extending \body of each representative rule $\rul$ in $(\cgr{\rul},\cgsize{})\in\vrts^\prime$ by \lblop.

To find sets of cliques such that every conflict in a program is considered, we introduce \emph{clique covers} for \lambdasol-graphs.
\begin{definition}[Clique cover]\label{def:cliquecover}
	Suppose a logic program \pp and a set $\cqb\subseteq\clqcg$ of \lambdasol-cliques in a graph $\gcgcov = (\vrts,\edges)$ are given.
	We say \cqb is a \emph{clique cover} for \gcgcov if every node in \vrts appears in at least one clique of \cqb.
	Moreover, \cqb is the \emph{minimal} clique cover if there is no clique cover $\cqb^\prime$ for \gcgcov \sT $|\cqb^\prime|\lneq|\cqb|$.
\end{definition}

A minimal clique cover for a graph \gcgcov, therefore, provides us with minimal compositions of cliques where every conflict is considered.
For this reason, our approach uses clique covers to determine which conflicts can be solved by the same \lambdasol-extensions which in turn can be used to find a suitable order in which conflicts and their solutions can be suggested to the user.
Herewith, we arrive at the following result which will be useful in the following.
\begin{proposition}\label{prop:mincliquecover}
	Given a program \pp with conflicts and a \lambdasol-clique $\cgcov\in\ccgpp$, a minimal clique cover for \gcgcov provides a minimal set of \lambdasol-extensions $\mathbf{L} = \{\lblop \mid (\vrts^\prime,\lblop)\in\cqb\}$ that is required to obtain a program without conflicts.
\end{proposition}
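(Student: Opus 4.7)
The plan is to establish a correspondence between clique covers of $\gcgcov$ and collections of \lambdasol-extensions that resolve every conflict in \pp when applied to the representatives selected by \cgcov. The proof naturally splits into two directions: sufficiency (any clique cover, and in particular a minimal one, yields such a set of \lambdasol-extensions) and minimality (no smaller set of \lambdasol-extensions can suffice).

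For sufficiency, I would start from a minimal clique cover $\cqb$ and define $\mathbf{L}$ as in the statement. By Definition~\ref{def:graph} and Definition~\ref{def:clique}, for each clique $(\vrts^\prime,\edgeslbl{\lblop},\lblop)\in\cqb$ the label \lblop is a common \lambdasol-extension for the representatives of all conflict groups \cg with $(\cg,\cgsize{\cg})\in\vrts^\prime$; in the singleton/self-loop case \lblop is a \lambdasol-extension for that one representative. Since $\cqb$ is a clique cover, every node of \gcgcov appears in some clique, so every representative in \cgcov gets extended by a label from $\mathbf{L}$. Applying Proposition~\ref{prop:groupcover} then yields a conflict-free program, so $\mathbf{L}$ is indeed a valid set of \lambdasol-extensions.

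For minimality, suppose $\mathbf{L}^\prime=\{\lambdasol_1,\ldots,\lambdasol_k\}$ is any set of \lambdasol-extensions whose application to the respective representatives of the groups in \cgcov yields a conflict-free program. For each $\lambdasol_i$, let $V_i$ be the set of nodes in \gcgcov corresponding to those groups whose representative is extended by $\lambdasol_i$. Any two distinct nodes of $V_i$ must, by Definition~\ref{def:graph}, be connected by an edge with label $\lambdasol_i$; hence $V_i$ is contained in the (unique) \lambdasol-clique with label $\lambdasol_i$. Call this clique $\cqid{i}$. Because $\mathbf{L}^\prime$ resolves every conflict in \cgcov, every node of \gcgcov lies in some $V_i$, hence in some $\cqid{i}$, so $\{\cqid{1},\ldots,\cqid{k}\}$ forms a clique cover of \gcgcov. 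Minimality of \cqb therefore gives $|\cqb|\le k$, and since $|\mathbf{L}|=|\cqb|$ by construction, we obtain $|\mathbf{L}|\le|\mathbf{L}^\prime|$.

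The main obstacle I expect is the minimality direction, specifically ensuring that the collection $\{\cqid{1},\ldots,\cqid{k}\}$ built from $\mathbf{L}^\prime$ is genuinely a clique cover in the sense of Definition~\ref{def:cliquecover}. This rests on two syntactic facts that are easy to check but must be stated explicitly: (i) each label in a \lambdasol-graph determines a unique maximal \lambdasol-clique, so mapping each $\lambdasol_i$ to $\cqid{i}$ is well defined; and (ii) two representatives that share a common \lambdasol-extension are necessarily joined by a correspondingly labelled edge, so no representative is ``left out'' of its clique. Once these observations are in hand, the counting argument $|\mathbf{L}|=|\cqb|\le k=|\mathbf{L}^\prime|$ closes the proof without further computation.
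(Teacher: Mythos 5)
Your sufficiency direction coincides with the paper's proof: the paper argues exactly that every conflicting rule is represented by a node of \gcgcov (via Proposition~\ref{prop:groupcover}), that each \lambdasol-clique condenses groups whose representatives share the label \lblop, and that a clique cover therefore yields a set $\mathbf{L}$ of \lambdasol-extensions sufficient to resolve all conflicts. Where you genuinely diverge is the minimality half: the paper disposes of it in a single closing sentence (``likewise, if the clique cover is minimal\dots''), whereas you give an actual counting argument, mapping any competing set $\mathbf{L}^\prime$ of \lambdasol-extensions applied to the representatives of \cgcov to a family of \lambdasol-cliques (using that, by Definition~\ref{def:graph}, every \lambdasol-extension of a representative labels some edge and each label determines a unique maximal clique), observing that this family is a clique cover, and concluding $|\mathbf{L}|=|\cqb|\le|\mathbf{L}^\prime|$. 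This buys a real justification of the word ``minimal'' that the paper only asserts, at the price of one implicit assumption you should make explicit: you assume each group in \cgcov is resolved by the extension applied to \emph{its own} representative, so that each applied $\lambdasol_i$ is a bona fide \lambdasol-extension of that representative and the corresponding edges/cliques exist in \gcgcov. If a conflict of one group were instead resolved indirectly, because the other rule of the conflict happens to be an extended representative of a different group, its node need not lie in any of your cliques $\cqid{i}$ and the constructed family need not be a clique cover. This reading (competitors are resolutions within the same cover-and-representative framework) is the natural one for the proposition and is no weaker than the paper's own treatment, but stating it would close the only loose end in your argument.
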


\begin{proof}
	Let $\cgcov\in\ccgpp$ be a \lambdasol-clique of a program $\pp=(\vrts,\edges)$ with conflicts, and $\cqb\subseteq\clqcg$ a clique cover for \gcgcov.
	By Definition~\ref{def:graph}, $V$ contains a node for each conflict group of a conflict group cover \cgcov.
	By Proposition~\ref{prop:groupcover}, \cgcov considers all conflicting rules of \pp.
	Thus, every conflict in \pp is implicitly represented by at least one node in \vrts.
	By Definition~\ref{def:clique}, a \lambdasol-clique $(\vrts^\prime,\edgeslbl{\lblop},\lblop)$ in \gcgcov condenses conflict groups that share a common \lambdasol-extension \lblop, and by Definition~\ref{def:cliquecover}, each node in $V$ appears in at least one clique of \cqb.
	Hence, \cqb implies a set of \lambdasol-extensions $\mathbf{L} = \{\lblop \mid (\vrts^\prime,\lblop)\in\cqb\}$ that are sufficient to resolve all conflicts in \pp, by extending the body of representative rules in the conflict groups suitably.
	Likewise if such a clique cover is minimal, \cgcov implies the smallest set of solutions for \pp.
\end{proof}

The notion of minimal cover is illustrated in the following example.

\begin{example}[Example~\ref{ex:run:induce} \contd]\label{ex:run:graphcover}
	The minimal clique cover in \gcgcovex is the set $\{\cqid{1},\cqid{2},\cqid{3},\cqid{4}\}$.
	Therefore, the four \lambdasol-extensions $\{\exC\}$, $\{\nottt\exF\}$, $\{\nottt \exH\}$, and $\{\nottt \exK\}$ suffice to obtain a program without conflicts.
\end{example}

Now we are ready to show how the results provide the crucial basis to define suitable orders over conflicts and \lambdasol-extensions for their usage in conflict resolution frameworks.

\section{Sorting conflicts and \lambdasol-extensions}%
\label{sec:sort-conflict-using-lambda-graphs}
In this section we show how \lambdasol-graphs and clique covers can be utilized to define strategies that compute (1) an order over all conflict groups of a program, and (2) an order over the \lambdasol-solutions of each conflict group.
These orders can then be used to define in which sequence they should be presented to the knowledge expert.
The goal is to improve the efficiency of the interactive conflict resolution process with the expert and to facilitate the resolution process overall by prioritizing those conflict groups whose solutions can be used for other conflict groups and thereby solve the most amount of conflicts simultaneously.
For that reason in this section, we provide an example of such a strategy that makes use of the technical notions introduced in the preceding sections.

We begin by introducing the notion of \emph{relationships}:
We say a \lambdasol-clique \emph{$\cq$ is related to another \lambdasol-clique $\cq^\prime$} iff $\cq$ and $\cq^\prime$ share a common node.
Moreover, we say that a conflict group $\cg$ is \emph{part of a \lambdasol-clique $\cq$} if the node that corresponds to \cg is in \cq.
With these conventions, we are now able to define a strategy to sort conflicts and their \lambdasol-extensions such that the user can resolve all conflicts in a more suitable way.

\subsection{Order strategy}%
\label{subsec:order-strategy}
\begin{algorithm}[tb]
	\caption{Interactive conflict resolution}
	\label{alg:algorithm}
	\textbf{Input}: Logic program \pp with conflicts\\
	\textbf{Output}: Uniformly non-contraditory logic program \ppp
	\begin{algorithmic}[1] 
		\STATE Let $\ppp = \pp$.
		\WHILE{$\ppp$ has conflicts}\label{line:whileincons}
		\STATE Choose a suitable conflict group cover $\cgcov\in\ccgppp$.\label{line:choosecov}
		\STATE Generate corresponding \lambdasol-graph \gcgcov.
		\STATE Choose a suitable clique cover $\cqb\subseteq\clqcg$.\label{line:chooseclq}
		\STATE Compute \cgord~: Sort conflict groups by the number of related \lambdasol-cliques in ascending order.\label{line:step1a}
		\FOR{Element $e$ in \cgord}
		\IF{$e$ contains more than one conflict group}
		\STATE Extend order \cgord~: sort conflict groups in $e$ by their weight in descending order.
		\STATE Sort conflict groups with the same weight in alphanumerical order.
		\ENDIF
		\ENDFOR
		\FOR{Element \cgr{\rul} in \cgord}
		\STATE Compute $\lambdaord{\cgr{\rul}}$: Sort \lambdasol-extensions of the representative of \cgr{\rul} by their \lambdasol-clique's weight.
		\FOR{Element $e$ in $\lambdaord{\cgr{\rul}}$}
		\IF{$e$ contains multiple \lambdasol-extensions}
		\STATE Extend \lambdaord{\cgr{\rul}}: Sort \lambdasol-extensions in $e$ in alphanumerical order.
		\ENDIF
		\ENDFOR
		\ENDFOR
		\STATE Present expert the conflict groups and the \lambdasol-extensions of the representatives in their respective order.
		\IF{Expert chooses a \lambdasol-extension for a conflict group \cgr{\rul}}
		\STATE In \ppp, replace \rul by the corresponding \lambdasol-extended rule of \rul.
		\ENDIF
		\ENDWHILE
		\STATE \textbf{return} \ppp
	\end{algorithmic}
\end{algorithm}

The goal of an order strategy is to establish an order in which the conflict groups are presented to the user and to additionally obtain an order for each conflict group that specifies how the respective \lambdasol-extensions are suggested.
The primary objective is to assist the knowledge expert to efficiently find the correct resolution for each conflict by preferring solutions that solve the most amount of conflicts simultaneously.

Recall that a clique cover of \gcgcov contains a set of conflict groups that considers all conflicts in \pp and, by definition, also implicitly provides possible solutions for every conflict.
The clique cover furthermore specifies explicitly which rules of \pp will be modified.
The following strategy will illustrate how these properties can be used to obtain an order over conflict groups and \lambdasol-extensions.

The first step employs two sorting criteria in lexicographical order.
First, a general order over the conflict groups in \cgcov is determined by viewing the number of edges of each node in \gcgcov in order to prefer conflict groups with less possible solutions.
Conflicts with only one possible solution can hereby be dealt with first which can potentially reduce the complexity of subsequent conflict resolutions.
However, since many conflict groups can have the same amount of solutions, we refine this order in a subsequent action  by taking the corresponding clique weights into account.

The second step of the strategy determines an order over \lambdasol-extensions for each conflict group.
Here again, clique weights are utilized.
Consequently, this strategy defines an order over conflict groups and possible solutions where those groups with the least possible solutions are presented first and for each conflict group those solutions are preferred that can be used to resolve related conflicts in parallel.

We now define these steps in more detail.
Each step is explained by means of our running example.

\paragraph{Step 1a:}
In the first step, the conflict groups are ordered by the number of cliques they are part of.
The user is shown those conflict groups first that are related to the least amount of cliques.
By this, the knowledge expert is being presented with as few choices at a time as possible.

\begin{example}[Example~\ref{ex:run:graphcover} \contd]\label{ex:run:step-1}
	The respective representatives of conflict groups \cgr{\rulid{2}}, \cgr{\rulid{8}}, \cgr{\rulid{10}}, and \cgr{\rulid{11}} each have only one possible \lambdasol-extension.
	The representatives of all remaining conflict groups have two.
	For \gcgcovex, we thus get the preliminary order of conflict groups
	$$\cgr{\rulid{2}},\cgr{\rulid{8}},\cgr{\rulid{10}},\cgr{\rulid{11}} \cgord \cgr{\rulid{4}},\cgr{\rulid{6}},\cgr{\rulid{13}},\cgr{\rulid{14}}$$
	that says that all conflict groups on the left side should be presented before those on the right.
\end{example}

It is easy to see that even in smaller programs, this kind of order can be too coarse-grained.
To order conflict groups that have the same amount of possible solutions, we propose the following subsequent step.

\paragraph{Step 1b:}
To refine the order obtained in Step 1a, we use the weight of the cliques.
As mentioned before, the weight of a clique represents the amount of different conflict groups that can be solved by the \lambdasol-extension that is represented by the label of the clique.
Therefore, the conflict groups with the same amount of solutions should additionally be arranged by the sum of weights of all cliques they are part of in descending order.
Thereby, we provide the knowledge expert with the opportunity to resolve as many conflicts as possible as soon as possible.
If there are conflict groups with the same total weight, we simply arrange them in alphanumerical order.

\begin{example}[Ex.~\ref{ex:run:step-1} \contd]\label{ex:run:step-2}
	Conflict group \cgr{\rulid{2}} is only part of clique \cqid{3} that has weight 2.
	Conflict group \cgr{\rulid{8}} is only part of clique \cqid{4} with weight 1.
	Conflict groups \cgr{\rulid{10}} and \cgr{\rulid{11}} are only part of clique \cqid{2} that has weight  5.
	For conflict group \cgr{\rulid{4}}, which is both in \cqid{2} and \cqid{3}, we get the total weight of 7 as \cqid{3} has weight  2 and \cqid{2} has weight  5.
	Likewise for \cgr{\rulid{6}} and \cgr{\rulid{13}}, we get a total weight of 9, and for \cgr{\rulid{14}} a total weight of 6.
	This way, for \gcgcovex we obtain the following, more specific order:
	\begin{align*}
		\cgr{\rulid{10}} \cgord \cgr{\rulid{11}} \cgord \cgr{\rulid{2}} \cgord \cgr{\rulid{8}} \cgord \cgr{\rulid{6}} \cgord \cgr{\rulid{13}} \cgord \cgr{\rulid{4}} \cgord \cgr{\rulid{14}}
	\end{align*}
\end{example}

\paragraph{Step 2:}
Step 1 provides us with a suitable order over conflict groups.
As conflict groups can have multiple \lambdasol-extensions (see Example~\ref{ex:run:multiple}), Step 2 defines how one can obtain an order $\lambdaord{\cg}$ over all \lambdasol-extensions for the representative of each conflict group \cg.
For that we will use the weight of their respective cliques.
That is, the \lambdasol-extensions for the representative in each group are ordered by the weight of their respective clique in descending order.
If for two extensions the clique weight is identical, again, we sort them in alphanumerical order.

\begin{example}[Ex.~\ref{ex:run:step-2} \contd]\label{ex:run:step-3}
	The representative of conflict groups \cgr{\rulid{2}}, \cgr{\rulid{8}}, \cgr{\rulid{10}}, and \cgr{\rulid{11}} each have their own unique solution, \viz, $\{\exC\}$, $\{\nottt \exK\}$, and $\{\nottt \exF\}$.
	According to Step 2 for \cgr{\rulid{6}}, $\{\nottt \exF\}$ should be suggested first as its clique has weight 5, and, if the expert does not accept the first extension, $\{\nottt \exH\}$ whose clique has size 4 can be presented as an alternative.
	The same order also holds for \cgr{\rulid{13}} as the possible \lambdasol-extensions of their representatives are identical.
	For \cgr{\rulid{4}}, $\{\nottt \exF\}$ has also to be suggested first, then the remaining solution $\{\exC\}$.
	Similarly, for \cgr{\rulid{14}}, $\{\nottt \exH\}$ has to be presented first and $\{\nottt \exT, \nottt \neg \exT\}$ after that.
	As a result, for those conflict groups with multiple extensions, we gain the following orders:
	\begin{align*}
		 & \{\nottt\exF\}\lambdaord{\cgr{\rulid{4}}}\{\exC\}, &  & \{\nottt\exF\}\lambdaord{\cgr{\rulid{6}}}\{\nottt\exH\}, &  & \{\nottt\exF\}\lambdaord{\cgr{\rulid{13}}}\{\nottt\exH\}, &  & \{\nottt\exH\}\lambdaord{\cgr{\rulid{14}}}\{\nottt\exT,\nottt\neg\exT\}
	\end{align*}
\end{example}

Keep in mind that since the expert can apply a \lambdasol-extension on conflicts of multiple conflict groups simultaneously, some of the subsequent conflict groups and solutions can become obsolete.
It is therefore necessary to recalculate the sorting of the remaining conflict groups and solutions once the expert accepted a \lambdasol-extension to be applied as the input program changes, in other words, after each program modification the remaining conflict groups have to be identified and the order over conflict groups and \lambdasol-extensions have to be computed accordingly.
The computation of \lambdasol-graphs and all its orders after each program modification is however expedient as it provides the means to implement the crucial functionality to postpone the resolution of certain conflicts as the orders of conflict groups and \lambdasol-extension are only a recommendation based on the syntactical properties of the program.

Algorithm~\ref{alg:algorithm} summarizes the complete workflow.
We recommend that in an actual implementation, the choices stated in lines~\ref{line:choosecov} and~\ref{line:chooseclq} should be made in interaction with the expert by presenting the conflicts in an appropriate fashion as Example~\ref{ex:run:extensions} shows that choosing the representatives of conflicts is a crucial step and not straightforward especially if multiple rules of a conflict are eligible for modification by a \lambdasol-extension.
An explicit implementation should therefore provide the ability to revise the chosen conflict groups and representatives if the final program is not deemed satisfactory.

To conclude this section, we illustrate the workings of this strategy by applying it on the running example.
This example will also propose possible ways to provide the knowledge expert with additional information using the \lambdasol-graph and simulate a possible line of thought of the expert.

\begin{example}[Example~\ref{ex:run:step-3} \contd]\label{ex:run:strat}
	Assume that a knowledge expert is assigned to resolve all conflicts in \ppEx.
	According to the conflict group order obtained in Example~\ref{ex:run:step-2}, conflict group \cgr{\rulid{10}} is presented first.
	As stated in Table~\ref{tab:tab-p}, \rulid{10} only has the possible solution $\{\nottt \exF\}$.
	The \lambdasol-graph of \gcgcovex shows that this solution belongs to \lambdasol-clique \cqid{5} (see Example~\ref{ex:run:cliques}) which states that the representatives of conflict groups \cgr{\rulid{4}}, \cgr{\rulid{6}}, \cgr{\rulid{11}}, and \cgr{\rulid{13}} all have $\{\nottt\exF\}$ as a possible solution.
	By showing these connections, the knowledge expert can decide whether they want to apply the extension $\{\nottt\exF\}$ to to the representatives of other conflict groups of \cqid{5} as well.
	Suppose that the expert knows of a connection between the property encoded in \exF and those encoded in \exY, \exW, and \exP.
	Regarding atom \exX and rules \rulid{3} and \rulid{4}, the expert sees no immediate connection.
	They therefore decide to apply the \lambdasol-extension $\nottt\exF$ not only to \rulid{10}, but also to \rulid{6}, \rulid{11}, and \rulid{13}.
	This action leaves the set $\cgcovprime = \{\cgr{\rulid{2}},\cgr{\rulid{4}},\cgr{\rulid{8}},\cgr{\rulid{14}}\}$ of unresolved conflict groups with the resulting \lambdasol-graph $\gcgcovprime=$ $(\{(\cgr{\rulid{2}},1),$ $(\cgr{\rulid{4}},1),$ $(\cgr{\rulid{8}},1),$ $(\cgr{\rulid{14}},2)\},$ $\{(\cgr{\rulid{2}},$ $\cgr{\rulid{4}},\{\exC\}),$ $(\cgr{\rulid{8}},\cgr{\rulid{8}},\{\nottt\exK\}),$ $(\cgr{\rulid{14}},\cgr{\rulid{14}},\{\nottt\exH\}),$ $(\cgr{\rulid{14}},\cgr{\rulid{14}},\{\nottt\exT,\nottt\neg\exT\})\}$
	Note that since the conflicts of \cgr{\rulid{6}} and \cgr{\rulid{13}} are resolved now, there is an additional self-loop $(\cgr{\rulid{14}},\cgr{\rulid{14}},\{\nottt\exH\})$ for \cgr{\rulid{14}}.
	This reduced \lambdasol-graph therefore leads to the following conflict group order
	$$\cgr{\rulid{2}} \cgordprime \cgr{\rulid{4}} \cgordprime \cgr{\rulid{8}} \cgordprime \cgr{\rulid{14}}.$$
	As before, along with \cgr{\rulid{2}}, the expert is also presented with \cgr{\rulid{4}} as they both still build up \lambdasol-clique \cqid{3} and their representatives share the common \lambdasol-extension $\{\exC\}$.
	Let the knowledge expert apply $\{\exC\}$ to both representatives, thereby reflecting that $\{\exC\}$ is a more suitable solution for resolving the conflict in \cgr{\rulid{4}} than $\{\nottt\exF\}$.
	This leaves the expert with the last conflict group cover $\cgcovprimeprime = \{\cgr{\rulid{8}},\cgr{\rulid{14}}\}$ and the resulting \lambdasol-graph $\gcgcovprimeprime=$ $(\{(\cgr{\rulid{8}},1),$ $(\cgr{\rulid{14}},2)\},$ $\{(\cgr{\rulid{8}},\cgr{\rulid{8}},\{\nottt\exK\}),$ $(\cgr{\rulid{14}},\cgr{\rulid{14}},\{\nottt\exH\}),$ $(\cgr{\rulid{14}},\cgr{\rulid{14}},\{\nottt\exT,\nottt\neg\exT\})\}$.
	For \cgr{\rulid{8}}, the expert chooses the only solution $\{\nottt k\}$.
	For the conflicts of \cgr{\rulid{14}}, the expert can lastly choose between $\{\nottt\exH\}$ and $\{\nottt\exT,\nottt\neg\exT\}$ as \lambdasol-extensions.
	Originally, the order over these extensions was $\{\nottt\exH\}\lambdaord{\cgr{\rulid{14}}}\{\nottt\exT,\nottt\neg\exT\}$ (see Example~\ref{ex:run:step-3}) because \lambdasol-clique \cqid{1} has a higher weight than \cqid{5}.
	Now that all conflicts are resolved except for those of \cgr{\rulid{14}} and the weight of the \lambdasol-clique regarding $\nottt\exH$ decreased from 4 to 2, the order between the two \lambdasol-extensions is determined by their alphanumerical order.
	In this case, the alphanumerical order coincides with the original order.
	Thus, for \cgr{\rulid{14}}, the expert is first presented with \lambdasol-extension $\{\nottt\exH\}$ which they immediately choose as the most fitting solution.
	This concludes the conflict resolution process that outputs the following conflict-free program:

	\begin{align*}
		 & \rulrow{1} \exA \when \exB,\nottt \exC.                   &
		 & \rulrow{2} \neg \exA \when \exB,\exC.                     &
		 & \rulrow{3} \exX \when \exD,\exE,\exF,\nottt \exC.         &
		 & \rulrow{4} \neg \exX \when \exD,\exE,\exC.                & \\
		 & \rulrow{5} \exY \when \exG,\exH,\exF.                     &
		 & \rulrow{6} \neg \exY \when \exG,\nottt~\exF.              &
		 & \rulrow{7} \exZ \when \exJ,\exK,\nottt \exL.              &
		 & \rulrow{8} \neg \exZ \when \exJ,\nottt \exL,\nottt\exK.   & \\
		 & \rulrow{9} \exW \when \exF,\exM,\exN.                     &
		 & \rulrow{10} \neg \exW \when \exM,\nottt~\exF.             &
		 & \rulrow{11} \neg \exW \when \exN,\nottt~\exF.             &
		 & \rulrow{12} \exP \when \exO,\exH,\exF,\nottt \exQ.        & \\
		 & \rulrow{13} \neg \exP \when \exO,\nottt \exQ,\nottt~\exF. &
		 & \rulrow{14} \exU \when \exS,\nottt\exH.                   &
		 & \rulrow{15} \neg \exU \when \exS,\neg \exT,\exH.          &
		 & \rulrow{16} \neg \exU \when \exS,\exT,\exH.               &
	\end{align*}
\end{example}

Note that Example~\ref{ex:run:strat} illustrates just one of many possible ways how ordering strategies and \lambdasol-graphs can be utilized for the implementation of interactive resolution of conflicts.
For instance, instead of completely omitting conflicts during the process once they are resolved, these conflicts can be shown further on, only flagging them as resolved.
This would provide the expert with additional information that is otherwise removed once the cliques of resolved conflict groups are removed.
Such a functionality could also be extended by the possibility to revert previous modifications.

\section{Related work}%
\label{sec:related-work}
The method of conflict resolution is closely related to the topic of \emph{ASP debugging}.
There we find several approaches that deal with the modification of logic programs~\cite{GebserPuehrerSchaubTompits2008,OetschPuehrerTompits2011,Shchekotykhin2015}.
These programs are not necessarily inconsistent.
They rather help the user knowledge expert to fix the mismatch between the current program's semantics and the semantics intended by the program's modeller.
In~\cite{DodaroGasteigerRealeRiccaSchekotihin2019}, the authors utilize the notion of incoherence to implement the debugging of programs.
All these approaches, however, require the knowledge expert to provide further information in order to detect and resolve the faulty parts of the program.

In~\cite{MenciaMarquesSilva2020}, the authors present an approach to resolve inconsistency (by contradictions or incoherence) by finding minimal sets of rules that are causing inconsistency.
Similar to~\cite{ThevapalanKernIsberner2022}, it also presents a way to compute possible solutions, which in this case are minimal correction sets of rules whose removal guarantee that the reduced program is consistent.
These minimal sets are identical to those found in the presented approach if the inconsistency is caused by contradictory literals.
Compared to that work, our approach in this paper helps to preserve information by not removing those rules, but instead exploiting dependencies and subtle differences in conflicting rules so that the knowledge expert is provided with suitable information to sharpen the rules by extending them.
In this way, (potential) conflicts are resolved and actually help to make the knowledge expressed by the program more professionally adequate.

In this work, we provide an interactive solution strategy by suggesting an order over the problem causes and an order over the possible solutions.
Once suitable solutions are available, both, the approach using \lambdasol-extensions as well as debugging approaches like those based on the \emph{meta-programming technique}~\cite{GebserPuehrerSchaubTompits2008} can be used to obtain a consistent program.

\section{Conclusion and future work}%
\label{sec:conclusion-and-future-work}
In practice, it is often imperative that a program (core) is usable with different instance data.
For example, in the medical sector, decision support systems are often required to be usable with different patient data.
Resolving conflicts in this manner ensures that program cores can be used in such real-world applications in a safely manner \wrt consistency.
This paper extends the work in~\cite{ThevapalanKernIsberner2020,ThevapalanKernIsberner2022} that propose a framework for obtaining uniformly non-contraditory logic programs by enabling knowledge experts to interactively resolve conflicts.
Since conflicting rules can have a large amount of possible solutions, methods are necessary that provide a proper order in which the conflicts of a program and their possible solutions should be presented to the knowledge expert.
This paper tackled this issue by providing the theoretical groundwork to define such sorting methods.
We furthermore provide an explicit strategy for sorting conflicts and \lambdasol-extensions to illustrate the results.
We have shown that the syntactical structure of answer set programs suffices to acquire preferences over conflicts and their solutions as it provides the necessary information to find relationships between conflicting rules and their \lambdasol-extensions.

Further investigations are needed to combine the syntax-based view to sort elements during the conflict resolution process with methods based on the (intended) semantics of the program.
For this, proper interaction with the user is needed to obtain the relevant information.
In future work we want to extend the approach of \lambdasol-extensions to consider other types of conflict groups as they are currently limited to finding solution for a single rule that is in conflict with one or multiple other rules and, for example, not the other way around (see rules \rulid{9}-\rulid{11} in Example~\ref{ex:run}).
Such results offer data to find more connections between rules and solutions and ergo for even more precise sorting strategies.
To illustrate the practical relevance and feasibility, we are currently working on an implementation of a conflict resolution framework with the presented capabilities.

\bibliographystyle{eptcs}
\bibliography{references}

\end{document}